\newtheorem{thm}{Theorem}
\newtheorem{defn}{Definition}
\newtheorem{rem}{\bf{Remark}}
\newtheorem{prethm}{{\bf Theorem}}
\begin{document}

\title{AN ADIABATIC QUANTUM ALGORITHM FOR DETERMINING GRACEFULNESS OF A GRAPH}

\author{Sayed Mohammad Hosseini$^{*}$, Mahdi Davoudi Darareh, \\ Shahrooz Janbaz and Ali Zaghian}
\thanks{$^{*}$ Corresponding author: hoseinism@mut-es.ac.ir \\
\scriptsize Faculty of Applied Sciences, Malek-e-Ashtar University of Technology, P.O.Box 115/83145, Esfahan, Iran.}


\maketitle
\pagestyle{myheadings}
\markboth{S.M. HOSSEINI, M. DAVOUDI, S. JANBAZ, A. ZAGHIAN}{AN AQC METHOD FOR DETERMINING GRACEFULNESS OF A GRAPH}

\begin{abstract}
Graph labelling is one of the noticed contexts in combinatorics and graph theory. Graceful labelling for a graph $G$ with $e$ edges, is to label the vertices of $G$ with $0, 1, \cdots, e$ such that, if we specify to each edge the difference value between its two ends, then any of $1, 2, \cdots, e$ appears exactly once as an edge label. For a given graph, there is still few efficient classical algorithms that determines either it is graceful or not, even for trees - as a well-known class of graphs. In this paper, we introduce an adiabatic quantum algorithm, which for a graceful graph $G$ finds a graceful labelling. Also, this algorithm can determine if $G$ is not graceful. Numerical simulations of the algorithm reveal that its time complexity has a polynomial behaviour with the problem size up to the range of 15 qubits. 
\end{abstract}

\noindent \small {{\bf keywords:} Graceful Labelling; Adiabatic Quantum Computation; Quantum Algorithm; Combinatorial Optimization Problem.}

\section{Introduction}	
\label{intro}
Based on the adiabatic theorem\cite{book}, adiabatic quantum computation (AQC) is a quantum algorithm which was introduced in 1998, first as an alternative for Grover's search algorithm.\cite{GrovFarhi} AQC is shown to be robust against unitary control errors and decoherence\cite{Robustness}, thus, it might be simpler for experimental implementations.\cite{nature} In fact, AQC finds the ground state of a predetermined problem Hamiltonian. Therefore, an algorithm based on AQC is inherently suitable for solving optimization problems.\cite{ThesisF,QEFarhi} In general, to solve a problem by AQC, we have to formulate the problem as an optimization problem, which the optimal value for its cost function is zero.

To set up an AQC system, first we should determine the initial Hamiltonian $H_{0}$, which is defined to have a known and easy-to-construct ground state. The system will be initiated in the ground state of $H_{0}$. Second, the problem Hamiltonian $H_{p}$ must be determined, which has all possible values of the total cost function as its eigenvalues. The ground state of $H_{p}$ represents the solution of the problem. Next, the system is set up in the ground state of $H_{0}$, and is evolved to the ground state of $H_{p}$ by a general interpolating scheme:
\begin{equation}
\centering
H(t)=[1-s(t)]H_{0}+s(t)H_{p},\label{Adiapath}
\end{equation}
slowly enough to fulfil the adiabatic conditions\cite{ThesisF}, where the function $s(t)$ varies from $0$ to $1$. According to the adiabatic theorem, the system remains in the ground state of its instantaneous Hamiltonian (\ref{Adiapath}) during the total evolution time $T$. The total evolution time $T$, must be determined proportional to inverse square of $g_{min}$, the minimum gap between the two lowest energy levels during the whole evolution\cite{book,QCFarhi}. Time complexity of an adiabatic algorithm is usually determined by analysing how $g_{min}$ (and consequently $T$) changes with the problem size.

It is important to notice that, quantum methods are inherently probabilistic, so they are usually set up to be used with a predetermined success probability. Generally, an important factor in determining this probability is the coherence time, the time after which the system undergoes decoherence and the algorithm fails. However, as we mentioned above, for an adiabatic algorithms the latter seems to be not the case. But for some reasons such as possible degeneracy in the spectrum of $H(t)$, which causes $g_{min}$ to vanish, $T$ cannot be determined finitely. So, we need to consider a finite alternative for the total evolution time such as $t=T^{\prime}$, for which the system may experience excitation and the algorithm may fail. Thus, we still need to consider a success probability $P_{s}$. Here $P_{s}$ is the probability of finding the system in the ground state of $H_{p}$, after a measurement done at $t=T^{\prime}$.
  We describe this constraint further and show how to define $P_{s}$ in the presence of such degeneracies and how to estimate $T^{\prime}$ using $P_{s}$. 
 
  In the first 2000's, the pros and cons of AQC were discussed on some instances of satisfiability problem (SAT).\cite{QCFarhi,VanDam} The efficiency of the method and limitations on application are also discussed consequently by the pioneers and others in some different works.\cite{Fail,NG,NPCV}. However, several AQC algorithms have been suggested to solve different hard problems and some of them have been experimentally implemented. SAT\cite{2satExp,3satExp}, integer factorization\cite{IntFactor}, Simon's problem\cite{Simon,SimonExp}, and some problems in graph theory such as Ramsey number of a graph\cite{ramsey,ramseyExp}, isomorphism\cite{isomorphism} and travelling salesman problem (TSP)\cite{TSP}, are recently studied (see also Refs.~\cite{nature,ThesisF}). A wide variety of other NP and hard problems in mathematics can be converted into a kind of optimization problem, called combinatorial optimization problem (COP). So, there are many potential candidate problems for being solved by an AQC algorithm. In this paper, we want to take the advantage of this property to introduce an adiabatic quantum algorithm for the graceful labelling problem, which obviously belongs to the complexity class NP.
 
 For the first time, Rosa\cite{Rosa} called a function $f$ a $\beta$-\textit{valuation of a graph} $G$ with $m$ edges, if $f$ is an injection from the vertices of $G$ to the set $\lbrace 0,1,\ldots,m\rbrace$ such that, when each edge $\lbrace v,w\rbrace$ is assigned the label $\vert f(v)-f(w)\vert$, the resulting edge labels are distinct. Golomb\cite{Golomb} subsequently called such labellings \textit{graceful labellings} and this is now the popular term. Erd\H os had believed that almost all graphs are not graceful, but many graphs that have some sort of regularity of structure are graceful.\cite{Erdos} Also, Sheppard\cite{Sheppard} has shown that there are exactly $e!$ gracefully labelled graphs with $e$ edges.

Graceful graphs have found a wide range of applications in different fields of science, such as  the X-ray crystallographic analysis, coding theory, communication network addressing, optimal circuit design, and database management.\cite{Bloom}

 One of the most famous and long standing open problems in graceful labelling is conjectured by Ringel and Kotzig, known as graceful tree conjecture\cite{Ringel-Kotzig}. This conjecture states that ``all trees are graceful", and it is verified just for all trees up to $35$ vertices. Also, we know that every tree can be embedded as an induced subgraph of a graceful tree.\cite{InTree} So, there is no forbidden subgraph characterization of various particular kinds of graceful graphs. Graceful trees have some further applications in combinatorial problems.\cite{stateofart}

The graceful labelling problem, by now, is solved only for very special cases of graphs. There are few optimization methods that allow one to determine the gracefulness of a given graph, in general. These methods are two mathematical programming methods\cite{Redl,Eshghi} and one meta-heuristic method based on ant colony optimization.\cite{Meta} For more details and results one can see Ref.~\cite{Galian}, where it is an extensive survey on graph labellings that is periodically updated.

In this paper, we show that our adiabatic quantum algorithm is generally applicable for all graphs. Though, the most efficiency is for the trees, as well as any disconnected graph in which the number of vertices is exactly one more than the number of edges. In these cases, the system needs the least number of qubits.  Our approach implies that the problem can be implemented on much faster devices than that are already presented, with probably much less resources.

The paper is organized as follows: after the introduction, Sect.~\ref{basic} presents some basic definitions, concepts and one theorem that enables us to formulate the graceful labelling as an optimization problem, what we will do in Sect.~\ref{cop}. Sect.~\ref{aqc} explains the adiabatic quantum formalism of our algorithm, and Sect.~\ref{impl} gives some notes for its implementation. Finally, we analyse our simulation results in Sect.~\ref{sim_res}, and conclude in the last section.

\section{Preliminaries}
\label{basic}
In the following, we depict the mathematical backgrounds we need to describe our algorithm. 
\begin{defn}\label{GL}
A graceful labelling of a graph $G=(V,E)$ with $N$ vertices and $e$ edges is a one-to-one mapping $\Psi$ of the vertex set $V(G)=\lbrace 0,1,...,N-1 \rbrace$ into the set $\lbrace 0,1,...,e\rbrace$ with the following property: If we define, for any edge $e_{i}=\lbrace u,v\rbrace \in E(G)$, the value $\Psi^{\bullet}(e_{i})=\vert \Psi(u)-\Psi(v)\vert$, then $\Psi^{\bullet}$ is a one-to-one mapping of the set $E(G)$ onto the set $\lbrace1,2,...,e\rbrace$. We define $L_{V}(G)$ and $L_{E}(G)$, the set of vertex and edge labels of $G$, respectively.
\end{defn}
Therefore, when $G$ admits a graceful labelling, $L_{V}(G)\subseteq\lbrace0,1,...,e\rbrace$ and $L_{E}(G)=\lbrace1,2,...,e\rbrace$.  
A graph which admits a graceful labelling is called a graceful graph.

A graph with $N$ vertices can have from $0$ to $\frac{N(N-1)}{2}$ edges. Clearly by definition, a graph with $e+1<N$, cannot admit a graceful labelling, since according to Rosa\cite{Rosa} ``it has too many vertices" and ``not enough labels" to be graceful. So we can focus only on graphs with $e+1\geqslant N$.
\begin{defn} 
Let $G$ be a graph with $N$ vertices. The adjacency matrix of $G$ is an $N\times N$ matrix $A$ with elements $ a_{ij}$ such that

\centering
$a{}_{ij}=\begin{cases} 
1\   $   if vertices $v_{i}$ and $v_{j}$ are connected $& \\0\     $   o.w.$ \end{cases}$.
\end{defn} 

Clearly, when the graph $G$ is simple, its adjacency matrix $A$ is symmetric. Moreover, the number of ones above/under its main diagonal equals to the number of edges in the graph $G$. 

\begin{figure}[htbp]
\centering
\usetikzlibrary{shapes.geometric}
\begin{tikzpicture}
[every node/.style={inner sep=0pt}]
\node (1) [circle, minimum size=16.25pt, fill=white, line width=0.625pt, draw=black] at (62.5pt, -137.5pt) {\textcolor{black}{1}};
\node (4) [circle, minimum size=16.25pt, fill=white, line width=0.625pt, draw=black] at (112.5pt, -137.5pt) {\textcolor{black}{4}};
\node (5) [circle, minimum size=16.25pt, fill=white, line width=0.625pt, draw=black] at (162.5pt, -137.5pt) {\textcolor{black}{5}};
\node (3) [circle, minimum size=16.25pt, fill=white, line width=0.625pt, draw=black] at (212.5pt, -137.5pt) {\textcolor{black}{3}};
\node (0) [circle, minimum size=16.25pt, fill=white, line width=0.625pt, draw=black] at (137.5pt, -100.0pt) {\textcolor{black}{0}};
\draw [line width=0.625, color=black] (1) to  (4);
\draw [line width=0.625, color=black] (4) to  (5);
\draw [line width=0.625, color=black] (5) to  (3);
\draw [line width=0.625, color=black] (5) to  (0);
\draw [line width=0.625, color=black] (4) to  (0);
\node at (87.5pt, -130.0pt) {\textcolor{red}{3}};
\node at (137.5pt, -130.0pt) {\textcolor{red}{1}};
\node at (187.5pt, -130.0pt) {\textcolor{red}{2}};
\node at (154.375pt, -113.75pt) {\textcolor{red}{5}};
\node at (120.0pt, -114.375pt) {\textcolor{red}{4}};
\end{tikzpicture}
\epsfig{file=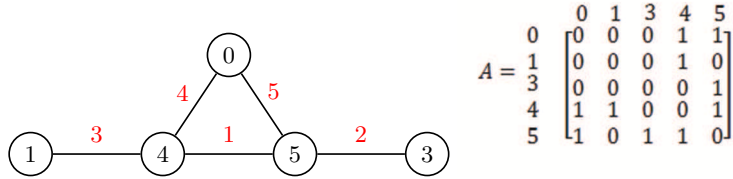, scale=1}
\vspace*{8pt}
\caption{A gracefully labelled graph $G$, and its adjacency matrix $A$.}
\label{SampleG}
\end{figure}

\begin{defn} 
A permutation $\pi$ of a finite set $S=\lbrace0,1,...,n\rbrace$ is a one-to-one correspondence on the set $S$, which sends $i\rightarrow \pi_{i}$ such that $\pi_{i}\in S$, and $\pi_{i}\neq\pi_{j}$ for $i\neq j$. \end{defn} 

We know that any permutation on such set with $n+1$ elements, can be represented by an $(n+1)\times(n+1)$ unitary matrix $P_{\pi}$, which is the permutation $\pi$ applied to the rows of the identity matrix $I_{n+1}$. Another representation for the permutation $\pi$ is
\begin{equation}\label{2rowPi} 
\centering
\pi=\begin{pmatrix}
0 & \cdots & i & \cdots & n \\ 
\pi_0 & \cdots & \pi_i & \cdots & \pi_n
\end{pmatrix},
\end{equation}
 \noindent where column $i$ indicates that $\pi$ sends $i\rightarrow\pi_{i}$ . The elements of the matrix $P_{\pi}$, therefore, are binary numbers given by $[P_{\pi}]_{ij}=\delta_{\pi_{i},j}$,
  where $\delta_{x,y}$  is the Kronecker delta of $x$ and $y$. For simplicity, from now on, we denote $P_{\pi}$ by $P$.
\begin{defn}\label{Ext}
Suppose $G$ is a graph with $N$ vertices and $e$ edges. The extension of the graph $G$, which is denoted by $G^{\prime}$, is the union of the graph $G$ with $r=e+1-N$ isolated vertices. \end{defn}
So $G^{\prime}$ has $N+r=e+1$ vertices, which are labelled from $0$ to $N^{\prime}$, and $N^{\prime}=e$.
Obviously, $G^{\prime}$ is graceful whenever $G$ admits a graceful labelling.
Similarly one can define $A^{\prime}$ to be the extension of matrix $A$, which is the $(N^{\prime}+1)\times (N^{\prime}+1)$ adjacency matrix of $G^{\prime}$, that has just some zero rows and columns more than $A$. We can see that the total number of 1's in both  $A$ and $A^{\prime}$ are equal.

\vspace*{12pt}
\noindent
\begin{rem}\label{nt1} The map $\Psi(A)$ that labels the graph $G$ is an injection ($V(G)\subseteq L_{V}(G)$), while the map $\Psi(A^{\prime})$ that labels the graph $G^{\prime}$ is a bijection ($V(G^{\prime})=L_{V}(G^{\prime})$). This is why we extend a graph like $G$ to its extension $G^{\prime}$. Furthermore, since there are always $r$ elements in $L_{V}(G)$ which $\Psi(A)$ maps no vertices of $G$ to them, we can simply extend any mapping $\Psi(A)$ to a mapping for $G^{\prime}$, by sending any of $r$ added vertices to these $r$ elements arbitrarily and one by one. Therefore, the fact that any labelling $\Psi(A^{\prime})$ is a bijection enables us to define it in terms of $A^{\prime}$ and a permutation matrix, as follows:
\begin{equation}\label{A''} 
\Psi (A^{\prime})= P A^{\prime} P^{T}.
\end{equation}
Here $P$ is a permutation matrix that permutes the rows of $A^{\prime}$, and so $P^{T}$ permutes the columns. Hence, the matrix $\Psi(A^{\prime})$, which we denote it by $A^{\prime\prime}$, is an adjacency matrix of the graph $G^{\prime\prime}$, that is clearly isomorphic\footnote{Two graphs $G_{1}$ and $G_{2}$, respectively with adjacency matrices $A_{1}$ and $A_{2}$, are called isomorphic, if there exist a permutation matrix $P$, such that $A_{2}=P A_{1} P^{T}$.}  to $G^{\prime}$.
\end{rem}

\begin{rem}\label{nt2} In an extended adjacency matrix $A^{\prime}$, the equality $a_{ij}=1$ has two meanings: 1) there exists an edge $e_{k}$ between vertices $i$ and $j$, 2) the edge's label is $\Psi^{\bullet}(e_{k})=|i-j|$. 
\end{rem}

\begin{defn}
Let $A^{\prime}$ be an $(N^{\prime}+1)\times(N^{\prime}+1)$ matrix. The minor diagonal $b_{i}$, $1\leq i\leq N^{\prime}$, is the following sequence of the elements: 
\begin{center}
$b_{i}: a^{\prime}_{0 i}, a^{\prime}_{1 i+1},\cdots, a^{\prime}_{j(i+j)}, \cdots, a^{\prime}_{(N^{\prime}-i)N^{\prime}}$,
\end{center}
where the length of the sequence $b_{i}$ is $N^{\prime}-i+1$.
\end{defn}

\noindent We can see that each $b_{i}$ is a sequence of elements of $A^{\prime}$, located parallel to the main diagonal.
The last two definitions are illustrated in Fig.~\ref{Extensions}, based on the graph $G$ in Fig.~\ref{SampleG}.

\begin{figure}[htbp]
\centering
\begin{tikzpicture}
[every node/.style={inner sep=0pt}]
\node (1) [circle, minimum size=16.25pt, fill=white, line width=0.625pt, draw=black] at (62.5pt, -137.5pt) {\textcolor{black}{1}};
\node (4) [circle, minimum size=16.25pt, fill=white, line width=0.625pt, draw=black] at (112.5pt, -137.5pt) {\textcolor{black}{4}};
\node (5) [circle, minimum size=16.25pt, fill=white, line width=0.625pt, draw=black] at (162.5pt, -137.5pt) {\textcolor{black}{5}};
\node (3) [circle, minimum size=16.25pt, fill=white, line width=0.625pt, draw=black] at (212.5pt, -137.5pt) {\textcolor{black}{3}};
\node (0) [circle, minimum size=16.25pt, fill=white, line width=0.625pt, draw=black] at (137.5pt, -100.0pt) {\textcolor{black}{0}};
\node (2) [circle, minimum size=16.25pt, fill=white, line width=0.625pt, draw=black] at (75.0pt, -100.0pt) {\textcolor{black}{2}};
\draw [line width=0.625, color=black] (1) to  (4);
\draw [line width=0.625, color=black] (4) to  (5);
\draw [line width=0.625, color=black] (5) to  (3);
\draw [line width=0.625, color=black] (5) to  (0);
\draw [line width=0.625, color=black] (4) to  (0);
\node at (87.5pt, -130.0pt) {\textcolor{red}{3}};
\node at (137.5pt, -130.0pt) {\textcolor{red}{1}};
\node at (187.5pt, -130.0pt) {\textcolor{red}{2}};
\node at (154.375pt, -113.75pt) {\textcolor{red}{5}};
\node at (120.0pt, -114.375pt) {\textcolor{red}{4}};
\end{tikzpicture}
\epsfig{file=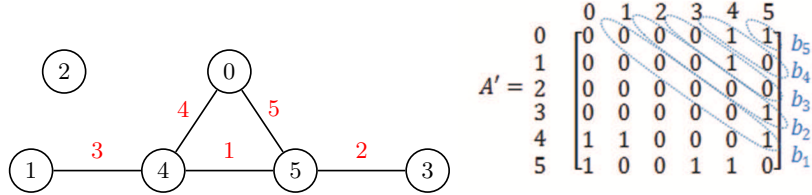, scale=1}
\vspace*{8pt}
\caption{The graph $G^{\prime}$ and its adjacency matrix $A^{\prime}$, the extensions of $G$ and $A$ by adding r=1 isolate vertex to $G$. Minor diagonals are circuited.}
\label{Extensions}
\end{figure}

\begin{rem}\label{nt3} Since $b_{i}$ is a binary sequence, we consider $m(b_{i})$ to be its Hamming weight \footnote{The Hamming weight of a binary string is the number of its non-zero elements.}. The important gain of extending a graph is that in an extended adjacency matrix $A^{\prime}$, for all elements of each minor diagonal $b_{i}$, the difference between the row and column indices is fixed and equal to $i$ (see Rem.~\ref{nt2}). Therefore $m(b_{i})$ is the number of edges with label $i$. Consequently we have
\begin{equation}\label{note3}
\sum_{i=1}^{N^{\prime}}m(b_{i})=e=N^{\prime}.
\end{equation}
\end{rem}
Now we are ready to state our main idea. 
\begin{thm}\label{Thm}
Let $G^{\prime}$ be the extension of the graph $G$ with the adjacency matrix $A^{\prime}$ with dimension $N^{\prime}+1$. Then $G$ is graceful if and only if in $A^{\prime}$ we have  $m(b_{i})=1$, for all $1\leq i\leq N^{\prime}$.
\end{thm}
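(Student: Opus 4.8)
The plan is to reduce the statement to the interpretation of minor diagonals already obtained in Remark~\ref{nt3}, after which both implications become a routine unwinding of Definition~\ref{GL}. I would first dispose of the bookkeeping reductions. By Definition~\ref{Ext}, $G'$ has exactly the same edge set as $G$; restricting any graceful labelling of $G'$ to the vertices of $G$ therefore yields a graceful labelling of $G$, which together with the observation following Definition~\ref{Ext} shows that $G$ is graceful if and only if $G'$ is graceful. By Remark~\ref{nt1}, every (automatically bijective) labelling of $G'$ is realised as $A''=PA'P^{T}$ for some permutation matrix $P$, and $A''$ is again the adjacency matrix of an extension of a graph isomorphic to $G$, so Remark~\ref{nt3} applies verbatim to $A''$; thus ``$m(b_i)=1$ in $A'$'' is to be read as the assertion that some such relabelled matrix $A''$ has all $m(b_i)=1$. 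It therefore suffices to show: a labelled extension with adjacency matrix $M$ and minor diagonals $b_1,\dots,b_{N'}$ is graceful precisely when $m(b_i)=1$ for all $1\le i\le N'$.

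For the ``if'' direction, assume $m(b_i)=1$ for every $i$. Each edge of the extension joins two vertices carrying distinct labels in $\{0,1,\dots,N'\}$, so its induced value $\Psi^{\bullet}$ lies in $\{1,\dots,N'\}$; hence $\Psi^{\bullet}$ maps the $e=N'$ edges into the $N'$-element set $\{1,\dots,N'\}$. By Remark~\ref{nt3}, $m(b_i)$ is the number of edges with induced label $i$, so $m(b_i)=1$ for all $i$ forces $\Psi^{\bullet}$ to be surjective; being a surjection between finite sets of equal cardinality --- the counts matching by (\ref{note3}) --- it is a bijection onto $\{1,\dots,N'\}$. Since the vertex labelling is a bijection (Remark~\ref{nt1}), every requirement of Definition~\ref{GL} holds, so the extension, and hence $G$, is graceful.

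For the ``only if'' direction, suppose $G$ is graceful and extend one of its graceful labellings to a bijective labelling of $G'$ as in Remark~\ref{nt1}, with resulting adjacency matrix $M=A''$. Because the added vertices are isolated, the induced edge-label map of $G'$ coincides with that of $G$, which by Definition~\ref{GL} is a bijection of $E(G)$ onto $\{1,\dots,e\}=\{1,\dots,N'\}$; in particular each $i\in\{1,\dots,N'\}$ is the induced label of exactly one edge. Applying Remark~\ref{nt3} once more gives $m(b_i)=1$ for all $1\le i\le N'$, completing the proof.

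I do not expect a genuine obstacle: the essential content --- that along the $i$-th minor diagonal of an extended adjacency matrix the difference between row and column indices is constantly $i$, so that $m(b_i)$ counts the edges with label $i$ --- has been isolated already in Remarks~\ref{nt2} and \ref{nt3}, and the counting identity (\ref{note3}) pins down the cardinalities. The only points demanding care are purely organisational: keeping the roles of $G$ and $G'$ distinct, checking that replacing $A'$ by a permuted matrix $A''$ preserves the ``extended adjacency matrix'' structure so that Remark~\ref{nt3} stays applicable, and observing that $\Psi^{\bullet}$ is automatically $\{1,\dots,N'\}$-valued, which is what lets surjectivity be promoted to bijectivity without a separate injectivity argument.
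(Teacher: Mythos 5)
Your proof is correct and follows essentially the same route as the paper's: both rest on Remark~\ref{nt3}'s identification of $m(b_{i})$ with the number of edges carrying label $i$, the counting identity (\ref{note3}), and a direct unwinding of Definition~\ref{GL}. The only difference is organisational: you make explicit the point (left implicit in the paper, whose statement and proof tacitly treat the labelling encoded in the matrix as the one under test) that the criterion really concerns a relabelled matrix $A^{\prime\prime}=PA^{\prime}P^{T}$, which is a sensible clarification rather than a different argument.
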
 
\begin{proof}
First consider the graph $G$ is graceful. By Definition \ref{GL}, we have $L_{E}(G)=\lbrace1,2,...,e\rbrace=L_{E}(G^{\prime})$, which means that for each $1\leq i\leq N^{\prime}$ there exist at least one edge with label $i$. According to Rem.~\ref{nt3}, it means that each $b_{i}$ contains at least one 1, that is
\begin{equation}\label{prf1}
\centering
 \ m(b_{i})\geq 1 \; ;\; \; i=1,2,\ldots ,N^{\prime}.
\end{equation}
On the other hand, the Definition \ref{Ext} together with equality (\ref{note3}) and inequality (\ref{prf1}), imply that each $b_{i}$ has exactly one 1, which means $m(b_{i})=1$. Conversely, if for all $1\leq i\leq N$ we have $m(b_{i})=1$, it means that $G^{\prime}$ has exactly one edge with label $i$ for $1\leq i\leq N^{\prime}$. Thus, we have $L_{E}(G)=L_{E}(G^{\prime})=\lbrace1,2,...,e\rbrace$ which means that $G$ is gracefully labelled. This completes our proof. 
\end{proof}

\section{Associated Combinatorial Optimization Problem (COP)}
\label{cop}
In this section we introduce a COP which is equivalent to finding a graceful labelling for a graph $G$. The search space is the Hamming space of binary strings like $s_{b}$ of length $(N^{\prime}+1)U$ bits, where $N^{\prime}=e$ is  the number of edges in $G$ and $U$ is the minimum number of bits required for binary representation of $N^{\prime}$, i.e. $U=\lceil Log_{2}(N^{\prime}+1)\rceil$. The cost function $C(s_{b})$ is a positive valued function which is minimized for the optimal bit string $s^{*}_{b}$, i.e. $C(s^{*}_{b})=0$.

As the first step, we note that our problem is to find a mapping that relabels $G^{\prime}$ to a graceful one. Rem.~\ref{nt1} states that how such a bijection (on $L_{V}(G^{\prime})$) can be rewritten in terms of $A^{\prime}$ (which is fixed for each graph) and a permutation matrix $P$ (which differs for different mappings). So the problem of finding a graceful labelling for a given graph $G$, can be considered as the problem of finding a permutation matrix that leads to the desired labelling for its extension $G^{\prime}$.
We can uniquely correspond an $(N^{\prime}+1)\times(N^{\prime}+1)$ permutation matrix $P$ to an $(N^{\prime}+1)U$-bit binary string. The idea is simply as follows and fully described in Ref.~\cite{isomorphism}.

Considering the other representation of $\pi$ in (\ref{2rowPi}), it is clear that the bottom row contains all information about $\pi$. Therefore, one can map a permutation to the integer sequence $s_{int}:=\pi_{0},\cdots,\pi_{i},\cdots,\pi_{n}$, or equivalently, to the sequence of their $U$-bit binary representation:

\begin{equation}\label{binPiseq}
\centering
s_{b}=\underbrace{(s_{0},s_{1},\cdots ,s_{U-1})}_{\text{$\pi_{0}$}},\underbrace{(s_{U},s_{U+1},\cdots ,s_{2U-1})}_{\text{$\pi_{1}$}},\cdots ,\underbrace{(s_{nU},s_{nU+1},\cdots ,s_{(n+1)U-1})}_{\text{$\pi_{n}$}}.
\end{equation}
Thus, we uniquely mapped a permutation matrix to an $(n+1)U$-bit binary string.

Conversely, not all binary strings correspond to a permutation. In fact, to map an $(n+1)U$-bit binary string $s_{b}$ to a permutation, its corresponding integer string needs to contain each element of $S$ just once. Explicitly, the following conditions must be held:
\begin{list}{}{}
\item[(i)] $s_{int}$ should not contain any integer larger than $n$,  
\item[(ii)] $s_{int}$ should not have repetition.
\end{list} 
Since our approach is based on the extended adjacency matrix of the graph $G$, from now on, we consider the $(N^{\prime}+1)\times(N^{\prime}+1)$ permutation matrices, i.e. $n=N^{\prime}$.

For the next step, we determine the structure of $C(s_{b})$. As mentioned above, the string $s_{b}$ must satisfy the conditions (i) and (ii). Also, we need a condition to guarantee that a string $s_{b}$ corresponds to a permutation that leads to a graceful labelling. The idea for this condition is obtained from Theorem~\ref{Thm}. We define the total cost function as follows: 
\begin{equation}\label{Total C}
\centering
C(s_{b})=C_{1}(s_{b})+C_{2}(s_{b})+C_{3}(s_{b}),
\end{equation}
where $C_{i}(s_{b})$, $i=1,2,3$ are non-negative cost functions for the conditions above. Thus, $C(s_{b})$ vanishes if and only if all $C_{i}$'s tend to zero, i.e. $s_{b}$ satisfies all conditions. The cost functions $C_{1}$ and $C_{2}$ must guarantee that $s_{b}$ corresponds to a permutation matrix $P$, so according to (i) and (ii), we can write
\begin{equation}\label{C1}
\centering
C_{1}(s_{b})=\sum_{i=0}^{N^{\prime}}\sum_{k=N^{\prime}+1}^{M^{\prime}}\delta_{\pi_{i},k},
\end{equation}

\begin{equation}\label{C2}
\centering
C_{2}(s_{b})=\sum_{i=0}^{N^{\prime}-1}\sum_{j=i+1}^{N^{\prime}}\delta_{\pi_{i},\pi_{j}},
\end{equation}
where $M^{\prime}=2^{U}-1$ is the largest integer that can be represented by $U$ bits.

 If we consider $U$-bit binary representations of $x$ and $y$, i.e. $x=x_{0}, \cdots, x_{U-1}$ and $y=y_{0}, \cdots, y_{U-1}$, then

\begin{equation}\label{IntDelt2Bin} 
\centering
\delta_{x,y}=\prod_{i=0}^{U-1}\delta_{x_{i},y_{i}}=\prod_{i=0}^{U-1}(x_{i}+y_{i}-1)^{2}=\begin{cases} 
1\ \quad x_{i}=y_{i}\,; \, \forall i \\ 0\ \quad x_{i}\neq y_{i}\,;  $    \textit{for some i}$  \end{cases}.
\end{equation}
So we have

\begin{equation} \label{C1'}
\centering
C_{1}(s_{b})=\sum_{i=0}^{N^{\prime}}\sum_{k=N^{\prime}+1}^{M^{\prime}}\prod_{r=0}^{U-1}(s_{iU+r}+k_{r}-1)^{2},
\end{equation}
where $k_{r}$ is the $r$-th bit in the binary representation of integer $k$; and

\begin{equation}\label{C2'}
\centering
C_{2}(s_{b})=\sum_{i=0}^{N^{\prime}-1}\sum_{j=i+1}^{N^{\prime}}\prod_{r=0}^{U-1}(s_{iU+r}+s_{jU+r}-1)^{2}.
\end{equation}

\noindent Now, we deduce the third cost function from Theorem~\ref{Thm} as follows:

\begin{equation}\label{C3}
\centering
C_{3}(s_{int})=\sum_{i=1}^{e}\left[ 1-\sum_{k=0}^{e-i}a^{\prime\prime}_{k,(i+k)}\right] ^{2},
\end{equation}
in which $a^{\prime\prime}_{k,(i+k)}$ is the $k$-th element of $b_{i}$, which is the $i$-th minor diagonal of $A^{\prime\prime}$.

The last step is to write $C_{3}(s_{int})$ explicitly in terms of the bits of $s_{b}$. According to (\ref{A''}) we have $A^{\prime\prime}=P A^{\prime} P^{T}$. Therefore, since each element of $b_{i}$ is in fact an element of $A^{\prime\prime}$, we can rewrite them in terms of the elements of $P$ and $A^{\prime}$:
\begin{equation}\label{a"ij}
 \centering
a^{\prime\prime}_{ij}=\sum_{k=0}^{N^{\prime}}\sum_{r=0}^{N^{\prime}}p_{ir} a^{\prime}_{rk}p_{kj}^{T}=\sum_{k=0}^{N^{\prime}}\sum_{r=0}^{N^{\prime}}p_{ir} a^{\prime}_{rk}p_{jk}.
\end{equation}
\noindent Finally, we should write each $p_{ij}$ in terms of binary elements of $s_{b}$:
\begin{equation}
\centering
p_{ij}=\left[ P_{\pi} \right]_{ij}=\delta_{i,\pi_{j}}=\prod_{r=0}^{U-1}(i_{r}+s_{jU+r}-1)^{2},\label{sigma ij}
\end{equation}
where $i_{r}$ is the $r$-th bit in the binary representation of integer $i$. By substituting (\ref{sigma ij}) in (\ref{a"ij}), and using (\ref{C3}), we obtain $C_{3}(s_{b})$.

\section{Adiabatic quantum computation for gracefulness of a graph}
\label{aqc}
Now we are ready to introduce an adiabatic quantum algorithm for graceful labelling problem. According to (\ref{Adiapath}), we just need to identify $H_{0}$ and $H_{p}$ for our specific problem. To do so, we map the corresponding COP onto an adiabatic quantum computation model. Consider the Hamming space of binary strings like $s_{b}$ of length $L=(N^{\prime}+1)U$ bits. Since this approach promotes each bit in $s_{b}$ to a qubit, our quantum register would also contain $L$ qubits. We assume the Hilbert space of the quantum register to be the span of the computational basis states (CBS) $\vert s_{b}\rangle$, which are the $2^{L}$ eigenstates of $\sigma_{z}^{0}\otimes \cdots \otimes \sigma_{z}^{L-1}$, and $\sigma_{z}^{l}$ is the $z$-Pauli operator corresponding to the $l$-th bit of $s_{b}$. Now, the problem Hamiltonian $H_{p}$ is defined to be diagonal in the CBS, with the eigenvalue $C(s_{b})$ for each eigenstate $\vert s_{b}\rangle$:
\begin{equation}\label{Hp_0}
H_{p}\vert s_{b}\rangle =C(s_{b}) \vert s_{b}\rangle \ \ ;\ \ s_{b}\in {\lbrace 0,1\rbrace}^{2^{L}}.
\end{equation}
This necessitates that: 
\begin{equation}\label{Hp}
\centering
H_{p}=\sum_{s_{b}}C(s_{b})\vert s_{b} \rangle\langle s_{b} \vert ,
\end{equation}
in which the $l$-th qubit of the quantum register ($0\leq l \leq L-1$), is described by the one-bit Hamiltonian $\frac{1}{2}[I^{l}-\sigma^{l}_{z}]$, where $I^{l}$ is the two dimensional identity operator corresponding to this qubit\cite{QCFarhi}. We can see that the eigenstates of $H_{p}$ correspond to all possible bit strings $s_{b}$.

On the other hand, since the initial Hamiltonian $H_{0}$ should not be diagonal in the basis that diagonalizes $H_{p}$, we choose:
\begin{equation}\label{H0}
\centering
H_{0}=\sum_{l=0}^{L-1}\frac{1}{2}[I^{l}-\sigma ^{l}_{x}],
\end{equation}
as a well-known initial Hamiltonian for COPs, where $\sigma^{l}_{x}$ is the x-Pauli operator for qubit $l$. Clearly, the ground state of $H_{0}$ (in the basis of eigenstates of $H_{p}$) is the easy-to-construct uniform superposition of all CBS\cite{QCFarhi}.

Finally, by substituting Eqs. (\ref{Hp}) and (\ref{H0}) in Eq. (\ref{Adiapath}), we obtain the total time-dependent Hamiltonian $H(t)$ for the adiabatic evolution that determines the gracefulness of $G$ at time $T$, where $s(T)=1$. 

\section{Illustrative Notes for Implementation}
\label{impl}
 To illustrate our adiabatic approach, we consider the only 3-vertex tree, $K_{1,2}$, as an input for our algorithm (see Fig.~\ref{P3P}).
 
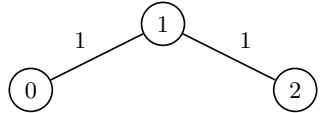
\begin{figure}[htbp]
\centerline{\begin{tikzpicture}
[every node/.style={inner sep=0pt}]
\node (2) [circle, minimum size=16.25pt, fill=white, line width=0.625pt, draw=black] at (112.5pt, -100.0pt) {\textcolor{black}{1}};
\node (1) [circle, minimum size=16.25pt, fill=white, line width=0.625pt, draw=black] at (62.5pt, -125.0pt) {\textcolor{black}{0}};
\node (3) [circle, minimum size=16.25pt, fill=white, line width=0.625pt, draw=black] at (162.5pt, -125.0pt) {\textcolor{black}{2}};
\draw [line width=0.625, color=black] (2) to  (1);
\draw [line width=0.625, color=black] (3) to  (2);
\node at (81.25pt, -106.25pt) {\textcolor{black}{1}};
\node at (143.75pt, -106.25pt) {\textcolor{black}{1}};
\end{tikzpicture}}
\vspace*{8pt}
\caption{Input labelling for our AQC algorithm for $K_{1,2}$.}
\label{P3P}
\end{figure}

Our adiabatic algorithm will search for permutations that change the input labelling to a graceful one. When the input of the algorithm is a tree with $e$ edges we do not need to extend the graph, since we have $N=e+1$ and $r=0$. Thus, for $K_{1,2}$ we have $N^{\prime}=e=2 $ and $ U=2$. Therefore, we obtain $M^{\prime}=3$ and  $L=6$. So we need a system with 6 qubit, prepared in the ground state of (\ref{H0}), which after obeying an adiabatic evolution (\ref{Adiapath}) with $s(t)=\dfrac{t}{T}$, will end up in the ground state of 

\begin{equation}\label{HpP3}
H_p=\dfrac{1}{16} (48-6\sigma^{2}_{z}-8\sigma^{3}_{z}+6\sigma^{2}_{z}\sigma^{3}_{z}-5\sigma^{4}_{z}+5\sigma^{2}_{z}\sigma^{4}_{z}-\sigma^{3}_{z}\sigma^{4}_{z}+\sigma^{2}_{z}\sigma^{3}_{z}\sigma^{4}_{z}-4\sigma^{5}_{z}+4\sigma^{3}_{z}\sigma^{5}_{z}$$$$+5\sigma^{4}_{z}\sigma^{5}_{z}+\sigma^{2}_{z}\sigma^{4}_{z}\sigma^{5}_{z}+\sigma^{3}_{z}\sigma^{4}_{z}\sigma^{5}_{z}+5\sigma^{2}_{z}\sigma^{3}_{z}\sigma^{4}_{z}\sigma^{5}_{z}-\sigma^{1}_{z}(4-8\sigma^{5}_{z}-2\sigma^{2}_{z}\sigma^{5}_{z}+\sigma^{4}_{z}(1+\sigma^{2}_{z}-\sigma^{5}_{z}+$$$$ \sigma^{2}_{z}\sigma^{5}_{z})+\sigma^{3}_{z}(-4+2\sigma^{2}_{z}\sigma^{5}_{z}+\sigma^{4}_{z}(1+\sigma^{2}_{z}-\sigma^{5}_{z}+\sigma^{2}_{z}\sigma^{5}_{z})))+\sigma^{0}_{z}(-5-\sigma^{3}_{z}+6\sigma^{4}_{z}+2\sigma^{3}_{z}\sigma^{4}_{z}-\sigma^{5}_{z}-$$$$
\sigma^{3}_{z}\sigma^{5}_{z}+\sigma^{2}_{z}(5+\sigma^{3}_{z}-\sigma^{5}_{z}-\sigma^{3}_{z}\sigma^{5}_{z})+\sigma^{1}_{z}(5+\sigma^{5}_{z}+6\sigma^{4}_{z}\sigma^{5}_{z}-\sigma^{2}_{z}(-1+\sigma^{3}_{z}(-5+\sigma^{5}_{z})+\sigma^{5}_{z})+$$$$
\sigma^{3}_{z}(1+\sigma^{5}_{z}+2\sigma^{4}_{z}\sigma^{5}_{z})))),
\end{equation}
 at $t=T$. To obtain (\ref{HpP3}), we replaced each bit of $s_{b}$ in (\ref{Total C}) by the operator $\frac{1}{2}[I^{l}-\sigma ^{l}_{z}]$ for $0\leq l \leq L-1$.
The evolution of some first eigenvalues (corresponding to the instantaneous lowest energy levels) of $H(t)$ during the procedure is shown in Fig.~\ref{P3Elevels}.  
\begin{figure}[htbp]
\includegraphics[scale=1]{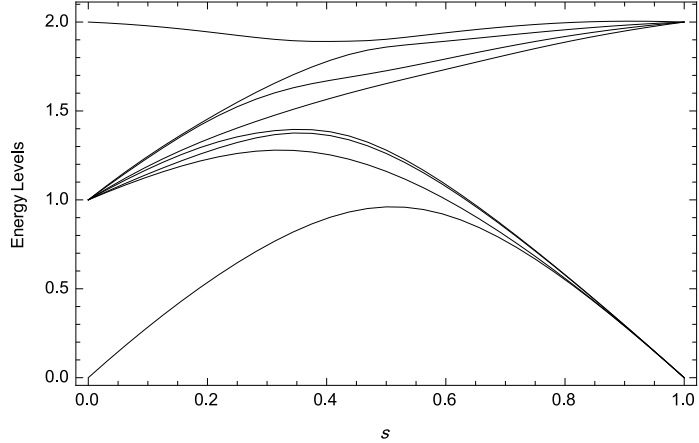} 
\vspace*{8pt}
\caption{The first eight (out of 64) energy levels of $H(s)$ during the adiabatic evolution for finding a graceful labelling for $K_{1,2}$.}
\label{P3Elevels}       
\end{figure}

One can see that for this $H_{p}$, the degree of degeneracy, $D$, is four. So, after a true adiabatic evolution, the quantum register would be in a superposition of the following degenerated ground states of $H_{p}$
\begin{equation}\label{AnsKets}
\lbrace \vert 001001\rangle ,\vert 011000\rangle ,\vert 010010\rangle ,\vert 100001\rangle \rbrace .
\end{equation}
Thus, after measurement we may obtain any of the corresponding bit strings or their equivalent integer strings $(0,2,1)$, $(1,2,0)$, $(1,0,2)$, or $(2,0,1)$. The corresponding permutations ($\pi^{1}$, $\pi^{2}$, $\pi^{3}$ and $\pi^{4}$, respectively) change the current labelling to graceful ones, which are the only possible graceful labellings for $K_{1,2}$, divided into two isomorphic classes:
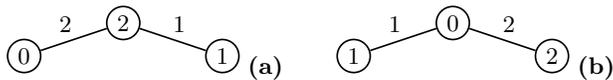
\begin{figure*}[htbp]
\centerline{\begin{tikzpicture}
[every node/.style={inner sep=0pt}]
\node (1) [circle, minimum size=12.5pt, fill=white, line width=0.625pt, draw=black] at (125.0pt, -100.0pt) {\textcolor{black}{1}};
\node (0) [circle, minimum size=12.5pt, fill=white, line width=0.625pt, draw=black] at (50.0pt, -100.0pt) {\textcolor{black}{0}};
\node (2) [circle, minimum size=12.5pt, fill=white, line width=0.625pt, draw=black] at (87.5pt, -87.5pt) {\textcolor{black}{2}};
\draw [line width=0.625, color=black] (0) to  (2);
\draw [line width=0.625, color=black] (2) to  (1);
\node at (65.625pt, -88.125pt) {\textcolor{black}{2}};
\node at (108.75pt, -88.125pt) {\textcolor{black}{1}};
\end{tikzpicture}
\textbf{(a)} \; \; \;
\begin{tikzpicture}
[every node/.style={inner sep=0pt}]
\node (2) [circle, minimum size=12.5pt, fill=white, line width=0.625pt, draw=black] at (125.0pt, -100.0pt) {\textcolor{black}{2}};
\node (1) [circle, minimum size=12.5pt, fill=white, line width=0.625pt, draw=black] at (50.0pt, -100.0pt) {\textcolor{black}{1}};
\node (0) [circle, minimum size=12.5pt, fill=white, line width=0.625pt, draw=black] at (87.5pt, -87.5pt) {\textcolor{black}{0}};
\draw [line width=0.625, color=black] (1) to  (0);
\draw [line width=0.625, color=black] (0) to  (2);
\node at (65.625pt, -88.125pt) {\textcolor{black}{1}};
\node at (108.75pt, -88.125pt) {\textcolor{black}{2}};
\end{tikzpicture}
\textbf{(b)}}
\vspace*{8pt}
\caption{Output labellings for $K_{1,2}$. $\pi^{1}$ and $\pi^{2}$ relabel the graph to labellings isomorphic to (a), and $\pi^{3}$ and $\pi^{4}$ relabel it to labellings isomorphic to (b).}
\label{RlP3}
\end{figure*}

Therefore, degeneracy in the ground state of $H_{p}$ is inherent for this optimization problem. Fig.~\ref{P3Elevels} also shows that the value of the $g_{min}$ tends to zero and the system may experience excitation. This means that the system may no longer be in the ground state of $H_{p}$ at any finite evolution time $t=T^{\prime}$. So we cannot determine the evolution time proportional to inverse square of $g_{min}$, as usually suggested for an adiabatic algorithm in the absence of degeneracy. Generally, it is still unknown that whether an adiabatic evolution with $g_{min}=0$ may end up in a desired state in an acceptable time or not. But fortunately, we can estimate the required time $T^{\prime}$, after which the system will be in one of the degenerated ground states, with a desired success probability $P_{s}$.\cite{QEFarhi,IntFactor} To do so, we solve the Schr\"{o}dinger equation

\begin{equation}\label{Schrod}
\centering
i\frac{d}{dt}\vert\psi(t)\rangle =H(t)\vert \psi(t)\rangle ,    
\end{equation}
for the total Hamiltonian $H(t)$, given in (\ref{intro}). Then we can properly determine the general state of our system at any time $T^{\prime}$, i.e. $\vert \psi(t=T^{\prime}) \rangle$. Consequently, to calculate the probability of the state of the system being the $j$-th degenerated ground state of $H_{p}$, i.e. $\vert \psi_{0}^{j} \rangle_{p}$, we use: 

\begin{equation}\label{PsPart}
\centering
P_{s}^{j}=\vert \langle \psi(t=T^{\prime}) \vert \psi_{0}^{j} \rangle_{p} \vert^{2}  \; ; \; \; 1\leq j\leq D.
\end{equation}

\noindent Then, we define the total success probability, $P_{s}$, as follows:
\begin{equation}\label{Ps}
\centering
P_{s}:=\sum_{j=1}^{D} P_{s}^{j}.
\end{equation}
We call it total success probability since it is a lower bound for the probability of successfully finding the minimum value of the total cost function $C(s_{b})$. As we show in Sect.~\ref{sim_res}, for the current example $K_{1,2}$, we obtained $P_{s}\simeq0.25$ at time $T^{\prime}\simeq2.5$. 

In the next section we see that the graceful labelling problem, even for the simplest example ($K_{1,1}$), has a degree of degeneracy ($D=2$). This actually decreases the probability of the system to pass an adiabatic evolution. However, as we stated above, since any of the degenerated final ground states leads to a graceful labelling, the total success probability for any specified $T^{\prime}$ is (at least) the sum of partial success probabilities ($P_{s}^{j}$) at $t=T^{\prime}$. The importance of this consideration will be more illustrated when we note that for the real-scaled examples, $D$ may increase intensively. Specially, Sheppard in Ref.~\cite{Sheppard} showed that there are exactly $e!$ (non-isomorphic) graceful graphs with $e$ edges, where half of them correspond to different labellings for the same graphs (degenerated answers). Including isomorphic labellings will increase $D$ even more.
 
 We should also note that with the same number of vertices, $D$ for graphs which are not graceful can be much greater than $D$ for graceful graphs (because in this case, $D$ is the number of degenerated ground states with a positive common eigenvalue, that is the number of bit strings which are not necessarily correspond to permutation matrices, but the value of the total cost function is the same for them). 
 This means that, though our algorithm is a true-biased Monte-Carlo algorithm\footnote{A randomized algorithm whose running time is deterministic, but whose output may be incorrect when it returns false, with a certain (typically small) probability.}, when it declares that a graph is not graceful, it can be a reliable output with a high probability. In such cases, by repeating the algorithm, one can ramp up the probability of returning the correct output to a number as close to unity as desired.\cite{MC}

%

\section{Simulation Results}
\label{sim_res}
We now perform a detailed analytical review of our simulation results. Table~\ref{TG} presents a guide to the graphs that we discuss here\footnote{To avoid ambiguity, in this paper we intentionally use $Z_{n}$ to denote paths with $n$ vertices, which are usually denoted by $P_{n}$ in graph theory.}.
\begin{table}
\caption{Guide to the names of the graphs.}
{\begin{tabular}{@{}c@{}}
\includegraphics[scale=1.5]{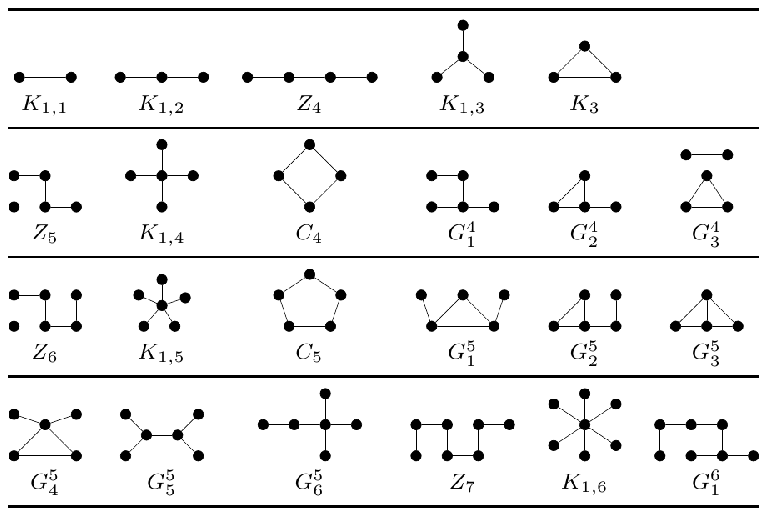}
\end{tabular}}\label{TG}
\end{table}
 We simulated our algorithm for all graphs with up to $e=4$ edges and $N=e+1=5$ vertices. These results are obtained by numerically solving (\ref{Schrod}) for the total Hamiltonian (\ref{Adiapath}) with $s(t)=\dfrac{t}{T}$, using the Runge-Kutta method. The calculations are performed using Mathematica 8.0.4 for Linux, on four Dual-Core AMD Opteron(TM) 2218 processors with 16GB of RAM.

 To achieve the results of the simulations for each graph, we numerically solved (\ref{Schrod}) for the evolution times $T^{\prime}=1,2,...,20$. For each evolution time, we calculated the total success probability as described in Sect.~\ref{impl}, using (\ref{PsPart}) and (\ref{Ps}). Then, we interpolated the results at $P_{s}=0.25$ for entries without an exact value.

\begin{table}
\caption{Results of simulation for our AQC algorithm for some primitive graphs, that need at most $L=15$ qubits. $D$ is the dimension of degeneracy space of ground states of $H_{p}$ and $T^{\prime}$ is the evolution time to reach the success probability $P_{s}=0.25$.}
{\begin{tabular}{ccccc}
\hline\hline
\hspace{0pt}\text{Graph \;\;\;\;\;} \hspace{45pt}& $e$ \hspace{45pt}& $L$ \hspace{45pt} & $D$ \hspace{45pt} & $T^{\prime}$\\
\hline\hline
$\;\;\;K_{1,1}$ \hspace{45pt} & $1$ \hspace{45pt} & $2$ \hspace{45pt} & $2$ \hspace{45pt}& $0$ \\  
\hline
$\;\;\;K_{1,2}$ \hspace{45pt} & $2$ \hspace{45pt} & $6$ \hspace{45pt} & $4$ \hspace{45pt} & $2.402$ \\ 
\hline
$\;\;\;Z_{4}$ \hspace{45pt} & \hspace{45pt} & \hspace{45pt} & $4$ \hspace{45pt} & $7.075$ \\  
$K_{1,3}$ \hspace{45pt} & $3$ \hspace{45pt} & $8$ \hspace{45pt} & $12$ \hspace{45pt} & $2.292$ \\ 
$K_{3}$ \hspace{45pt} & \hspace{45pt} & \hspace{45pt} & $12$ \hspace{45pt} & $2.504$ \\ 
\hline
$Z_{5}$ \hspace{45pt} & \hspace{45pt} & \hspace{45pt} & $8$ \hspace{45pt} & $19.360$ \\ 
$K_{1,4}$ \hspace{45pt} & \hspace{45pt} & \hspace{45pt} & $48$ \hspace{45pt} & $5.557$ \\ 
$C_{4}$ \hspace{45pt} & \hspace{45pt} & \hspace{45pt} & $16$  \hspace{45pt} & $11.221$ \\ 
$G^{4}_{1}$ \hspace{45pt} & $4$ \hspace{45pt} & $15$ \hspace{45pt} & $12$ \hspace{45pt} & $16.085$ \\ 
$G^{4}_{2}$ \hspace{45pt} & \hspace{45pt} & \hspace{45pt} & $20$ \hspace{45pt} & $9.311$ \\ 
$G^{4}_{3}$ \hspace{45pt} & \hspace{45pt} & \hspace{45pt} & $120$ \hspace{45pt} & $5.547$ \\ 
\hline\hline
\end{tabular}}\label{TRes}
\end{table}

We categorized these results in Table~\ref{TRes} according to the number of qubits ($L$). The dependence of the average of these  evolution  times (for each category) on the size of the system (number of the qubits) is shown in Fig.~\ref{ChAll} (solid line). It shows an exactly quadratic fit to the simulated data. 
Because of incorporating all possible graphs corresponding to the same $L$, this excellent fit suggests that the time complexity of the algorithm be (at least) of a polynomial order. On the other hand, if we categorize the results by known classes of graphs such as stars ($K_{1,n}$) and paths ($Z_{n}$)\footnote{According to evolution times and degeneracy degrees, these two classes can be considered as the boundary cases of the algorithm (the first is the best-case, and another is one of the worst-cases).}, we will see that an approximately quadratic fit still remains (see Fig. \ref{ChAll} dashed and dotted lines). According to our classical computer capabilities, these statements are based on the simulations up to $15$ qubits.

\begin{figure}[tbp]
\includegraphics[scale=1]{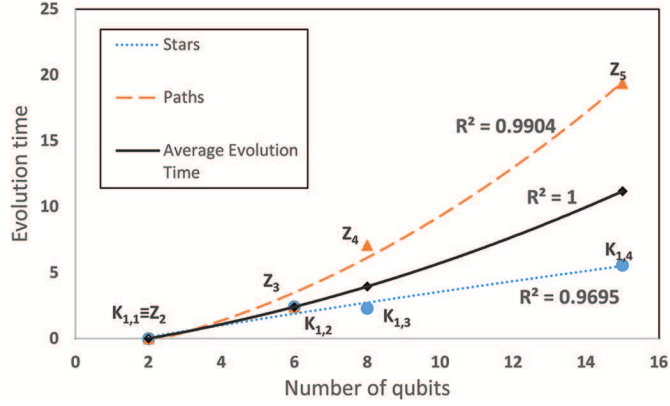} 
\vspace*{8pt}
\caption{The Average evolution time growth, and the evolution time growth for $Z_{n}$ and $K_{1,n}$ with the number of qubits, for $P_{s}=0.25$ ($R^{2}$ is the regression of the interpolated fits.}
\label{ChAll}
\end{figure}

Table~\ref{TDd} represents the degeneracy degrees of the ground states of $H_{p}$ for some larger graphs. These results are obtained by calculating $C(s_{b})$ for all possible $L$-bit strings ($L=18,21$) on a classical computer. It shows that, with a fixed number of edges, how $D$ differs for the different classes of graphs. Especially, it shows the remarkable difference between these values of $D$ for the not graceful graph $C_{5}$ and other $5$-edge graphs.

\begin{table}
\caption{Dimension of degeneracy space of the ground states of $H_{p}$ for some graphs with $e=6,7$ edges. See Table~\ref{TG} for a guide to the names of graphs.}
{\begin{tabular}{@{}cccccccccc@{}}
\hline\hline
$5$-edge graph: & $G^{5}_{1}$ & $G^{5}_{2}$ & $G^{5}_{3}$ & $G^{5}_{4}$ &  $G^{5}_{5}$ & $G^{5}_{6}$ & $K_{1,5}$ & $Z_{6}$ & $C_{5}$\\  
\hline
$D$: & $40$ & $28$ & $64$ & $72$ & $48$ & $36$ & $240$ & $24$ & $1220$\\ 
\hline\hline
$6$-edge graph: & $Z_{7}$ & $K_{1,6}$ & $G^{6}_{1}$ \\
\hline
$D$: & $32$ & $1440$ & $44$\\   
\hline\hline
\end{tabular}}\label{TDd}
\end{table}

\section{Conclusions}
\label{cncld}
In this paper, we have introduced an adiabatic quantum algorithm for graceful labelling problem for finite graphs. We also did some simulations for some simple graphs. Then we discussed that however degeneracy in the ground states of the problem Hamiltonian may decrease the chance of the system to pass a true adiabatic evolution, but it does increase the probability of finding the system in any of the desired eigenstates (or a superposition of them), after any pre-estimated evolution time $T^{\prime}$.  Simulations are carried out for systems of up to 15 qubits. Finally, we performed a detailed analysis of the simulation results, which showed that the time complexity of the algorithm can be of a polynomial order with respect to the number of qubits. 

\section*{Acknowledgments}

We should acknowledge P. Sheikholeslam and E. Najafi at MUT High Performance Computing Center, who specially dedicated their valuable time out of schedule to different steps of our computations.

\vspace*{-6pt}   

\end{document}